\documentclass[conference,9pt]{IEEEtran}

\usepackage{amsfonts,amsmath,amsthm,amssymb}
\usepackage{enumerate}
\usepackage{graphics,graphicx,subfigure}

\usepackage{color}
\usepackage{todonotes}
\usepackage{cancel}
\usepackage{url}
\usepackage{hyperref}
\usepackage{makeidx}
\usepackage{showidx}
\usepackage{multicol}        
\usepackage{xspace}
\usepackage{stmaryrd}       
\usepackage{pifont}          
\usepackage{fancybox}     
\usepackage{bm}

\newtheorem{theorem}{Theorem}
\newtheorem{proposition}{Proposition}
\newtheorem{corollary}{Corollary}

%%%%%%%%%%%%%%%%%%
%% Boldface, calligraphic, etc %%
%%%%%%%%%%%%%%%%%%

%% boldface -- "v" is for vector

\newcommand{\vu}{{\bs u}}
\newcommand{\vv}{{\bs v}}

\newcommand{\vy}{{\bs y}}
\newcommand{\vz}{{\bs z}}
\newcommand{\vA}{{\bs A}}
\newcommand{\vB}{{\bs B}}
\newcommand{\vC}{{\bs C}}
\newcommand{\vD}{{\bs D}}
\newcommand{\vE}{{\bs E}}

\newcommand{\vI}{{\bs I}}

\newcommand{\vM}{{\bs M}}

\newcommand{\vX}{{\bs X}}
\newcommand{\vY}{{\bs Y}}
\newcommand{\vZ}{{\bs Z}}

%% calligraphic
\newcommand{\cA}{\mathcal{A}}

\newcommand{\cS}{\mathcal{S}}

%% mathbb

\newcommand{\bP}{\mathbb{P}}

\newcommand{\bR}{\mathbb{R}}

%%%%%%%%%%%%%%%%%%
%%  Environment abbreviations %%
%%%%%%%%%%%%%%%%%%

\newcommand{\be}{\begin{equation}}
\newcommand{\ee}{\end{equation}}
\newcommand{\bal}{\begin{align}}
\newcommand{\eal}{\end{align}}
\newcommand{\ba}{\begin{align*}}
\newcommand{\ea}{\end{align*}}
\newcommand{\bmx}{\begin{matrix}}
\newcommand{\emx}{\end{matrix}}
\newcommand{\bbmx}{\begin{bmatrix}}
\newcommand{\ebmx}{\end{bmatrix}}
\newcommand{\bpmx}{\begin{pmatrix}}
\newcommand{\epmx}{\end{pmatrix}}
\newcommand{\bvmx}{\begin{vmatrix}}
\newcommand{\evmx}{\end{vmatrix}}

%%%%%%%%%%%%%%%%
%% Command short-hands %%
%%%%%%%%%%%%%%%%

\newcommand{\ol}{\overline}

\newcommand{\wt}{\widetilde}
\newcommand{\f}{\frac}

\newcommand{\imp}{\Longrightarrow}

%%%%%%%%%%%%%%%%
%% Mathematical notations %%
%%%%%%%%%%%%%%%%

\newcommand{\inc}{\subseteq}

\newcommand{\lbt}{\llbracket}
\newcommand{\rbt}{\rrbracket}

\DeclareMathOperator{\sgn}{sgn}
%\newcommand{\sgn}{\backsim}

%\newcommand{\cl}{\mathrm{cl}}

%% Greek letters

\newcommand{\eps}{\varepsilon}
  
%% LJ commands

% Shortcuts related to this paper 
\newcommand{\range}[1]{\lbt #1 \rbt}
\newcommand{\bilr}[2]{{\Sigma}_{(#1)}^{[#2]}}
\newcommand{\ubilr}[2]{\wt{\Sigma}_{(#1)}^{[#2]}}
\newcommand{\bilrset}[2]{{\Sigma}_{#1}^{[#2]}}
\newcommand{\ubilrset}[2]{\wt{\Sigma}_{#1}^{[#2]}}
\DeclareMathOperator{\CW}{CW} % consistency width
\newcommand{\Amap}{\bs{\cl A}} % the linear operator, the bold caligraphic A
\newcommand{\bAmap}{\sgn \bs{\cl A}} % the binary operator

% New operators
\DeclareMathOperator{\rank}{rank} % support
 % support
 % row support
 % col support

% special fonts

\newcommand{\bs}{\boldsymbol}
\newcommand{\bb}{\mathbb}
\newcommand{\cl}{\mathcal}

% Misc
%\newcommand{\suppr}{\supp_{\rm r\,}}
%\newcommand{\suppc}{\supp_{\rm c\,}}
\newcommand{\ts}{\textstyle}

\newcommand{\ie}{\emph{i.e.},\xspace}
\newcommand{\eg}{\emph{e.g.}\xspace}
\newcommand{\iid}{%
    \ifmmode% math mode
        \mathrm{i.i.d.}%
    \else%
        i.i.d.\@\xspace%
    \fi%
}

%% Commenting commands

%\smartqed

%% todo's

%%%%%%%%%%%%%%%%%%%%%%%%%%%%%%%%%%%%%%

\title{\vspace{-4mm}One-Bit Sensing of Low-Rank and Bisparse Matrices\vspace{-3mm}}
\author{\IEEEauthorblockN{Simon Foucart\IEEEauthorrefmark{1} and Laurent Jacques\IEEEauthorrefmark{2}}
\IEEEauthorblockA{\IEEEauthorrefmark{1}Texas A\&M University, USA. \IEEEauthorrefmark{2}UCLouvain, Belgium.}\vspace{-6mm}}

\pagestyle{plain}
\begin{document}
\maketitle
\begin{abstract}
This note studies the worst-case recovery error of low-rank and bisparse matrices as a function of the number of one-bit measurements used to acquire them. First, by way of the concept of consistency width, precise estimates are given on how fast the recovery error can in theory decay. Next, an idealized recovery method is proved to reach the fourth-root of the optimal decay rate for Gaussian sensing schemes. 
This idealized method being impractical, 
an implementable recovery algorithm is finally proposed in the context of factorized Gaussian sensing schemes.
It is shown to provide a recovery error decaying as the sixth-root of the optimal rate.
\end{abstract}

%%%%%%%%%%%%%%%%%
%% The main text starts here %%
%%%%%%%%%%%%%%%%%

\section{Introduction}

We are interested in matrices that are simultaneously low-rank and bisparse,
\ie anticipating over some notation to be introduced momentarily, matrices in the set%
\begin{multline*}
\bilr{s}{r} 
:= \big\{
\vX  \in \bR^{n \times n}:
\rank(\vX) \le r,\, \exists S,T \inc \range{n},\\
|S|=|T|=s,\,\vX_{\ol{S} \times \range{n}} = {\bs 0},\,\vX_{\range{n} \times \ol{T}} = {\bs 0}\big\}.  
\end{multline*} 
Determining the minimal number of linear measurements allowing for uniform stable recovery of matrices in $\bilr{s}{r}$ is a difficult problem, \eg \cite{OJFEH} contains negative results for optimization-based recovery,
while \cite{FGR} showed, with some caveats, that $\Theta(rs\ln(n/s))$ measurement suffice. 
Here, we consider a similar problem for quantized linear measurements,
hoping for more favorable conclusions in this case.
Precisely, suppose that matrices $\vX \in \bilr{s}{r}$ are acquired via the one-bit measurements
\be
\label{Msts}
y_i = {\rm sgn}(\langle \vA_i , \vX \rangle_F )
= {\rm sgn} ({\rm tr} (\vA_i^\top \vX)),
\quad i \in \range{m}.
\ee
We ask: how large should $m$ be to make it possible to derive, from $\vy \in \{ \pm 1 \}^m$,
 an approximant $\vX'$ such that $\|\vX - \vX'\|_F \le \eps$,
 or, turning things around,
 how fast can $\|\vX - \vX'\|_F$ decay when $m$ increases?
 Phrased in terms of \emph{consistency width}, Section~\ref{SecCW}
 addresses this question with lower and upper estimates that almost match.
 A (weaker) upper estimate also appears in Section~\ref{SecIHT}, which introduces an idealized recovery scheme not enforcing consistency of the approximant. 
 This idealized scheme being impractical (probably NP-hard), 
 we exhibit in Section~\ref{Fapractalgo} 
 a multistep recovery scheme working with specific sensing schemes.
\\[-3mm]

{\bf Notation:} 
Given $\vM \in \bR^{n \times n}$, $\vM_{\Omega_1 \times \Omega_2}$ represents the submatrix of $\vM$ indexed by $\Omega_1 \times \Omega_2 \inc \range{n}\times \range{n}$, with $\range{n}:=\{1,\ldots,n\}$.
The set $\overline{\Omega} \inc \range{n}$ stands for $\range{n} \setminus \Omega$. 
The Frobenius norm is denoted by $\|\cdot\|_F$ and the Frobenius inner product by $\langle \cdot,\cdot \rangle_F$.
Given index sets $S,T \inc \range{n}$, the set $\bilrset{S \times T}{r}$ contains all rank-$r$ matrices supported on $S \times T$.
A tilde over a set $\cl K \inc \bR^{n \times n}$,
\eg $\ubilr{s}{r}$ or $\ubilrset{S \times T}{r}$,
means that this set is intersected with the sphere $\bb S_F^{n \times n}$ of $\bR^{n \times n}$ with respect to the Frobenius norm. 
The notation for the unit ball is $\bb B_F^{n \times n}$. 

We often rewrite \eqref{Msts} in more compact form by defining the map $\bs{\cl A}:\vM \in \bR^{n \times n} \mapsto \big( \langle \vA_i, \vM \rangle_F \big)_{i=1}^m \in \bb R^m$, so that 
\eqref{Msts} also reads $\vy = \bAmap(\vX)$ with $\bAmap := \sgn \circ\,\Amap$.
The sign operator $\sgn$ acts entrywise when applied to a vector.
We say that $\bs{\cl A}$ is a \emph{Gaussian random} map when the matrices $\bs A_1,\ldots,\bs A_m$ are independent $n\times n$ Gaussian random matrices, \ie $(\bs A_i)_{k,l} \sim_{\iid} \cl N(0,1)$.  
The adjoint of $\bs{\cl A}$ 
is the map $\bm{\cA}^*: \bs v \in \bR^m \to \sum_i v_i \bs A_i \in \bR^{n \times n}$.
All absolute constants are denoted by $C,C', c, c',\ldots$ and their values can change from line to line. 

\section{Consistency width}
\label{SecCW}

The $m^{\rm th}$ \emph{consistency width} of a set $\cl K \inc \bb S_F^{n \times n}$
is defined as
\begin{multline*}
\CW^m(\cl K) := \inf_{\bs{\cl A}: \bb R^{n \times n} \to \bb R^m}
\sup \{ \|\vX - \vX'\|_F: \vX,\vX' \in \cl K,\\
\; \bAmap(\vX) = \bAmap(\vX')\}.
\end{multline*}
Essentially, $\CW^m(\cl K)$ provides the best (with respect to $\Amap$) worst-case error of any \emph{consistent} recovery scheme, \ie one that outputs $\bs X' \in \cl K$ with the same one-bit measurements as the sensed matrix $\bs X \in \cl K$, see \cite{jacques2016error}.     
We shall show that
\be
\label{eq:two-bounds-for-CW}
c\,\frac{rs}{m}\ 
\le\ \CW^m\big(\ubilr{s}{r}\big)\ \le\ 
C\,\frac{rs}{m}\ \ln\left(\frac{nm}{rs}\right).
\ee

\subsection{Lower estimate}
Let us first establish the lower bound in \eqref{eq:two-bounds-for-CW}.
Given $\cl K \inc \bb S_F^{n \times n}$, we define the smallest
achievable worst-case error of any combination of one-bit maps $\sgn \bs{\cl A}$ with recovery schemes $\bs \Delta: \{\pm 1\}^m \to \cl K$ by
$$
\cl E^m(\cl K) := \inf_{(\bs{\cl A}, \bs \Delta)}
\sup_{\bs X \in \cl K} \|\bs X - \bs\Delta\big(\bAmap(\bs X)\big)\|_F.
$$ 
Note that $\bs \Delta$ is not required to output a consistent approximant.
However, the smallest achievable worst-case errors using consistent or unrestricted schemes are in fact equivalent, as emphasized below.
\begin{proposition}
For any $\cl K \inc \bb S_F^{n \times n}$, one has
\label{prop:CW-minimax-link}  
\begin{equation}
\label{eq:CW-minimax-link}    
\cl E^m(\cl K) \leq \CW^m(\cl K) \leq 2 \cl E^m(\cl K).
\end{equation}
\end{proposition}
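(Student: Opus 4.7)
The proposition asserts two inequalities bracketing $\CW^m(\cK)$ by the unrestricted minimax error $\cE^m(\cK)$, and both follow from fairly short arguments; the main task is to set up the right scheme or triangle-inequality chain in each direction.

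For the lower bound $\cE^m(\cK)\le \CW^m(\cK)$, the plan is to exhibit, for each sensing map $\bs{\cl A}$, a specific \emph{consistent} recovery $\bs\Delta_{\bs{\cl A}}:\{\pm 1\}^m\to \cK$ whose worst-case error over $\cK$ is controlled by the supremum appearing in the definition of $\CW^m$. Concretely, for every sign pattern $\bs q\in\{\pm 1\}^m$ I pick (via an arbitrary choice function) some $\bs\Delta_{\bs{\cl A}}(\bs q)\in \cK$ with $\bAmap(\bs\Delta_{\bs{\cl A}}(\bs q))=\bs q$ whenever such a matrix exists, and anything in $\cK$ otherwise. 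Then for every $\vX\in\cK$, the pair $\vX,\vX':=\bs\Delta_{\bs{\cl A}}(\bAmap(\vX))$ are both in $\cK$ with $\bAmap(\vX)=\bAmap(\vX')$, so $\|\vX-\vX'\|_F$ is bounded by the supremum in the definition of $\CW^m$ for this $\bs{\cl A}$. Taking the sup over $\vX\in\cK$ and then the inf over $\bs{\cl A}$ yields $\cE^m(\cK)\le \CW^m(\cK)$.

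For the upper bound $\CW^m(\cK)\le 2\cE^m(\cK)$, I use a triangle-inequality argument valid for \emph{any} pair $(\bs{\cl A},\bs\Delta)$. Fix such a pair, fix $\vX,\vX'\in\cK$ with $\bAmap(\vX)=\bAmap(\vX')$, and note that $\bs\Delta$ produces the \emph{same} output on both inputs, so
\[
\|\vX-\vX'\|_F\le \|\vX-\bs\Delta(\bAmap(\vX))\|_F+\|\bs\Delta(\bAmap(\vX'))-\vX'\|_F\le 2\sup_{\bs Z\in\cK}\|\bs Z-\bs\Delta(\bAmap(\bs Z))\|_F.
\]
Taking the supremum over consistent pairs $(\vX,\vX')$ gives, for this $\bs{\cl A}$, $\sup\{\|\vX-\vX'\|_F:\vX,\vX'\in\cK,\ \bAmap(\vX)=\bAmap(\vX')\}\le 2\sup_{\bs Z\in\cK}\|\bs Z-\bs\Delta(\bAmap(\bs Z))\|_F$. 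Since $\bs{\cl A}$ bounds $\CW^m(\cK)$ from above by its left-hand side (after infimum), and $(\bs{\cl A},\bs\Delta)$ was arbitrary, passing to the infimum over all pairs on the right-hand side yields $\CW^m(\cK)\le 2\cE^m(\cK)$.

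There is no genuine obstacle here; the only subtlety is the choice-function step on the first bound, where one must make sure that existence of a consistent recovery is guaranteed by construction, which is handled by selecting a representative in each non-empty preimage $\bAmap^{-1}(\bs q)\cap\cK$ and an arbitrary element of $\cK$ otherwise.
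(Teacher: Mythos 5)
Your proof is correct and follows essentially the same route as the paper's: the triangle inequality through the common output $\bs\Delta(\bAmap(\vX))=\bs\Delta(\bAmap(\vX'))$ for the upper bound, and the construction of a consistent decoder via a choice function on the preimages $\bAmap^{-1}(\bs q)\cap\cK$ for the lower bound. The paper states the choice-function step more tersely ("choose an arbitrary consistent $\bs X'$"), but the content is identical.
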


\begin{proof} Given $(\bs{\cl A},\bs \Delta)$ and $\bs X, \bs X' \in \cl K$ sharing the same one-bit measurements, 
in view of $\bs \Delta(\bAmap(\bs X))= \bs \Delta(\bAmap(\bs X'))$, there holds
$\|\bs X 
- \bs X'\|_F 
\leq \|\bs X 
-\bs \Delta(\bAmap(\bs X)) \|_F 
+\|\bs X' 
-\bs \Delta(\bAmap(\bs X'))\|_F$, 
which yields the rightmost inequality in \eqref{eq:CW-minimax-link}.

Now, given $\bs{\cl A}$ and $\bs X \in \cl K$,
we choose $\bs \Delta(\bAmap(\bs X))$ as an arbitrary matrix $\bs X' \in \cl K$ consistent with $\vX$.
The leftmost inequality in  \eqref{eq:CW-minimax-link} follows easily.
\end{proof}

Proposition \ref{prop:CW-minimax-link}  implies in particular that the lower bound in~\eqref{eq:two-bounds-for-CW} can be derived from the following lower estimation of $\cl E^m(\ubilr{s}{r})$.

\begin{theorem}
\label{prop:lower-bound-minimax-bisparse-lr}
For an error level $\eps > 0$ , one has   
\begin{equation}
  \label{eq:lower-bound-result}
  \cl E^m\big(\ubilr{s}{r}\big) \leq \eps\ \imp m \geq C \eps^{-1} rs.  
\end{equation}
Equivalently, this means that $\cl E^m\big(\ubilr{s}{r}\big) \geq C\frac{rs}{m}$.
\end{theorem}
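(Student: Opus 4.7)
The plan is to isometrically embed a low-dimensional Euclidean sphere of the right dimension inside $\ubilr{s}{r}$, thereby reducing the problem to the classical one-bit sensing lower bound on $\bb S^{d-1}$, which follows from a hyperplane-arrangement cell count combined with a spherical-cap volume estimate.

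\textbf{Isometric embedding.} I would fix index sets $S, T \subset \range{n}$ with $|S| = |T| = s$, and a matrix $\bs V_0 \in \bR^{n \times r}$ with orthonormal columns whose rows outside $T$ vanish. The linear space $V := \{\bs U \bs V_0^\top : \bs U \in \bR^{n \times r},\ \bs U_{\ol{S},\cdot} = \bs 0\}$ has dimension $sr$ and lies in $\bilrset{S \times T}{r} \subset \bilr{s}{r}$, and the map $\bs U \mapsto \bs U \bs V_0^\top$ is Frobenius-isometric, since $\bs V_0^\top \bs V_0 = \bs I_r$. Hence $V \cap \bb S_F^{n \times n}$ is an isometric copy of the canonical sphere $\bb S^{sr-1}$ sitting inside $\ubilr{s}{r}$. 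Restricting any encoder-decoder pair $(\Amap, \bs\Delta)$ optimal for $\ubilr{s}{r}$ to inputs in $V\cap \bb S_F^{n\times n}$ and composing $\bs\Delta$ with the Frobenius-nearest-point map onto $V \cap \bb S_F^{n\times n}$ (same triangle-inequality argument as in the left part of Proposition~\ref{prop:CW-minimax-link}), one obtains $\cl E^m(\ubilr{s}{r}) \geq \tfrac{1}{2}\cl E^m(V \cap \bb S_F^{n \times n}) = \tfrac{1}{2}\cl E^m(\bb S^{d-1})$ with $d := sr$. It therefore suffices to prove $\cl E^m(\bb S^{d-1}) \geq c\,d/m$ for the canonical sphere.

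\textbf{Cell count and volume.} Any one-bit sensing map $\sgn \bs L : \bR^d \to \{\pm 1\}^m$ induces a central arrangement of $m$ hyperplanes that partitions $\bb S^{d-1}$ into at most $2\sum_{k=0}^{d-1}\binom{m-1}{k} \leq 2(em/d)^d$ sign-constant regions (Schl\"afli's classical bound). If $\cl E^m(\bb S^{d-1}) \leq \eps$, every such region is mapped by the optimal decoder to a single point of $\bb S^{d-1}$ within Frobenius distance $\eps$, hence has diameter at most $2\eps$ and lies in a spherical cap of geodesic radius $\leq 2\eps$, whose surface measure is at most $|\bb S^{d-2}|(2\eps)^{d-1}/(d-1)$. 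Since the cells cover $\bb S^{d-1}$, combining with the asymptotic $|\bb S^{d-1}|/|\bb S^{d-2}| \sim \sqrt{2\pi/d}$ yields $2(em/d)^d \gtrsim \sqrt{d}\,\eps^{-(d-1)}$. Taking $d$-th roots gives $m \gtrsim d/\eps$, and pulling back through the embedding delivers $\cl E^m(\ubilr{s}{r}) \geq c\,sr/m$, as required.

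\textbf{Main obstacle.} The delicate point is the $\eps$-scaling extracted from the $d$-th root: the raw arrangement bound only produces $m \gtrsim d\,\eps^{-(d-1)/d}$, which falls short of $d/\eps$ by the factor $\eps^{1/d}$. For $d$ large (our regime of interest) this factor is negligible, but to obtain a clean absolute constant one must either restrict to a regime such as $\eps \leq \eps_0$ or dispose of the boundary cases separately by a trivial single-packed-pair argument. The bisparse and low-rank structure enters solely through the isometric embedding in the first step; the rest is geometrically routine.
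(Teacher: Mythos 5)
Your argument is essentially the paper's: the paper likewise restricts to an $rs$-dimensional subspace of $\bilr{s}{r}$ (of the form $\{\bs M\bs N\}$ with $\bs M$ column-supported on $S$ and $\bs N$ row-supported on $T$), bounds the number of sign patterns of a $d$-dimensional subspace by $(2m/d)^d$ via \cite[Lemma~1]{JLBB} --- which is exactly the Schl\"afli hyperplane-arrangement count you invoke --- and plays this off against a $\gtrsim rs\ln(C/\eps)$ lower bound on the log-covering number of the $rs$-dimensional sphere, your spherical-cap volume estimate being an equivalent way of obtaining the latter. The $\eps^{1/d}$ slack you candidly flag at the end is present in the paper's version as well (it is silently absorbed into the unspecified constant in ``putting these two facts together''), so your proposal matches the paper's proof in both substance and precision.
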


\begin{proof}
Suppose that $\cl E^m\big(\ubilr{s}{r}\big) \leq \eps$.
We fix subsets $S,T$ of $\range{n}$ with $|S| = |T| =  s$, a matrix $\bs M \in \bb R^{n \times r}$
such that $\bs M_{\overline{S}\times \range{r}} = \bs 0$, 
and consider the $rs$-dimensional subspace $\cl S$ of $\bb R^{n \times n}$ defined by 
$\cl S := \{\bs M \bs N: \bs N \in \bb R^{r \times n},\ \bs N_{\,\range{r} \times \overline{T}} = \bs 0\}$. 
Note that $\cl S \inc \bilr{s}{r}$,
hence $\cl E^m(\wt{\cl S}) \leq \cl E(\ubilr{s}{r}) \leq \eps$. 
Therefore, for any $\eps'>\eps$, say $\eps'= 2\eps$, we can find a pair $(\Amap, \bs \Delta)$ such that 
$\|\bs X - \bs\Delta\big(\bAmap(\bs X)\big)\|_F \le \eps'$ for all $\bs X \in \wt{\cl S}$.
This provides a $2\eps$-covering of $\wt{\cl S}$ with $|\bAmap(\cl S)|$ elements.
Recall, however, that the latter is much smaller than the obvious bound~$2^m$.
Indeed, \cite[Lemma 1]{JLBB} guarantees that
the cardinality of the image under $\bAmap$ of a $d$-dimensional subspace $\cl S'$ of $\bb R^{n \times n}$ satisfies $|\bAmap(\cl S')| \leq (\frac{2m}{d})^d$.
In our situation, we thus obtain a $2\eps$-covering of $\wt{\cl S}$ with log-cardinality at most $rs \ln(2m/(rs))$.
But, $\wt{\cl S}$ being isomorphic to the sphere $\bb S^{r \times s}_F$,
the log-cardinality of such a covering must be at least $ rs \ln(C/\eps)$.
Putting these two facts together gives $m \geq C \eps^{-1} rs$,
so that \eqref{eq:lower-bound-result} is proved.
\end{proof}

\subsection{Upper estimate}

Let us now establish the upper bound in \eqref{eq:two-bounds-for-CW}. 
We first show that, with high probability on a Gaussian random map $\Amap$, all pairs of matrices in a set $\cl K \inc \bb B^{n\times n}_F$ that are consistent under $\bAmap$ must be at most $\eps$ apart provided $m$ is large compared to $\eps$ and to the intrinsic set complexity. This complexity is measured by the {\em Kolmogorov entropy} of $\cl K$, \ie 
$$
\cl H(\cl K,\eta) := \ln\,\inf\{\,|\cl G|:~\cl G \inc \cl K \inc \cl G+\eta\bb B_F^{n\times n}\},
$$
where the addition is the Minkowski sum between sets. 
Thus, the smallest number of translated Frobenius balls with radius $\eta$ that cover~$\cl K$ is $\exp \cl H(\cl K,\eta)$. 
\begin{theorem}
\label{ThmCWUp}
Let $\bs{\cl A}: \bb R^{n \times n} \to \bb R^m$ be a Gaussian random map.
Given a subset $\cl K$ of $\bb S^{n \times n}_F$ and $\eps>0$, 
if
\begin{equation}
\label{eq:sample-complex-bound-consist-implic}
\ts m \ge \f{C}{\eps} \cl H\big(\cl K, c\,\eps\big),
\end{equation}
then it occurs with probability at least $ 1-\exp(-c \eps m)$ that,
for all $\vX,\vX' \in \cl K$,
\begin{equation}
\label{eq:consist-implic}
\bAmap(\vX) = \bAmap(\vX')\     
\imp \ 
\| \vX - \vX' \|_F \le \eps.
\end{equation}
\end{theorem}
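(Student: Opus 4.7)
The plan is a classical covering-and-Chernoff argument. Let $\cl G$ be a minimal $\eta$-net of $\cl K$ with $\eta = c_1 \eps$, so $|\cl G| \le \exp(\cl H(\cl K, c_1 \eps))$. For any pair $\vX_0, \vX_0' \in \cl G$ on the Frobenius sphere with $\|\vX_0 - \vX_0'\|_F \ge \eps/2$, the angle $\theta(\vX_0, \vX_0')$ is comparable to the chord distance, so $\theta \ge c_2 \eps$; hence, for Gaussian $\vA$, the probability that $\sgn\langle \vA, \vX_0\rangle_F \ne \sgn\langle \vA, \vX_0'\rangle_F$ equals $\theta/\pi \ge c_3 \eps$. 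The Hamming distance $d_H(\bAmap(\vX_0), \bAmap(\vX_0'))$ is then a sum of $m$ independent Bernoullis of mean at least $c_3 \eps$, so a multiplicative Chernoff bound yields $d_H \ge c_4 m \eps$ with probability at least $1 - \exp(-c_5 m \eps)$. A union bound over the $|\cl G|^2$ pairs, together with the sample-complexity hypothesis, guarantees this separation simultaneously for all far-apart net pairs with probability at least $1 - \exp(-c_6 \eps m)$.

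The next step is a uniform local control: for each $\vX_0 \in \cl G$ and every $\vX \in \cl K \cap (\vX_0 + \eta \bb B_F^{n\times n})$, we want $d_H(\bAmap(\vX_0), \bAmap(\vX)) \le c_4 m \eps / 4$. A sign flip at index $i$ requires $|\langle \vA_i, \vX_0\rangle_F| \le |\langle \vA_i, \vX - \vX_0\rangle_F|$. I would split the indices according to whether $|\langle \vA_i, \vX_0\rangle_F| \le t$ for a threshold $t \asymp \eps$: a Chernoff estimate applied to the $m$ Bernoullis $\mathbf{1}\{|\langle \vA_i, \vX_0\rangle_F| \le t\}$, each of probability $\lesssim t$ since $\langle \vA_i, \vX_0\rangle_F \sim \cl N(0,1)$, yields at most $c_7 m t$ such indices with probability $\ge 1 - \exp(-c_8 m t)$. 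For the remaining indices, a sign flip forces $|\langle \vA_i, \vX - \vX_0\rangle_F| > t$, which can be handled by covering the localized perturbation set $(\cl K - \vX_0) \cap \eta \bb B_F^{n\times n}$ at a sufficiently fine scale and applying a uniform Gaussian tail bound on the resulting finite net.

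Finally, suppose for contradiction that some $\vX, \vX' \in \cl K$ have $\bAmap(\vX) = \bAmap(\vX')$ and $\|\vX - \vX'\|_F > \eps$. Their nearest net points $\vX_0, \vX_0' \in \cl G$ satisfy $\|\vX_0 - \vX_0'\|_F > \eps/2$, and the triangle inequality for the Hamming distance gives $c_4 m \eps \le d_H(\bAmap(\vX_0), \bAmap(\vX)) + 0 + d_H(\bAmap(\vX'), \bAmap(\vX_0')) \le c_4 m \eps / 2$, a contradiction. The main obstacle is the uniform local bound of the second paragraph: a naive Gaussian tail bound on $\langle \vA_i, \vX - \vX_0\rangle_F$ cannot be taken uniformly over the continuum of $\vX$, so keeping the total sign-flip budget proportional to $m\eps$ requires a careful secondary covering argument balanced against the threshold $t$ chosen above.
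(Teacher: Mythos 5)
Your outline follows the tessellation strategy of Plan--Vershynin/Oymak--Recht/Bilyk--Lacey, and its first and third steps are sound: the separation of far-apart net points via the orthant probability $\theta/\pi$ plus a Chernoff and union bound, and the final Hamming triangle inequality, both go through under the stated hypothesis. The gap is exactly where you flag it, in the uniform local control, and it is not a technicality that a ``sufficiently fine secondary covering'' repairs. Two concrete problems. First, a quantitative one: with net radius $\eta \asymp \eps$ and threshold $t \asymp \eps$, for a \emph{fixed} $\vX$ in the cell of $\vX_0$ the variable $\langle \vA_i, \vX-\vX_0\rangle_F$ is $\cl N(0,\|\vX-\vX_0\|_F^2)$ with $\|\vX-\vX_0\|_F$ possibly as large as $\eta$, so $\bP(|\langle \vA_i, \vX-\vX_0\rangle_F|>t)$ is a \emph{constant} (depending on $t/\eta$), not $O(\eps)$; the expected number of flips at ``large'' indices is then $\Theta(m)$, blowing the budget $c_4 m\eps/4$ already pointwise unless you shrink $\eta$ to roughly $\eps/\sqrt{\ln(1/\eps)}$. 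Second, and more fundamentally: even after fixing that, passing from a fixed $\vX$ to all $\vX$ in the cell requires controlling the residual below the secondary net, which reproduces the same problem one scale down; closing it needs a genuine chaining argument over the localized sets $(\cl K-\vX_0)\cap\eta\bb B_F^{n\times n}$, hence entropy (or Gaussian width) information at \emph{all} scales, which the single-scale hypothesis $m \ge \f{C}{\eps}\cl H(\cl K,c\eps)$ does not hand you. This is precisely why the full-strength statement is nontrivial; the paper itself does not prove it but cites Bilyk--Lacey for it.

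For contrast, the paper's own argument deliberately avoids your second step altogether: it takes the net at the much finer scale $\rho = c\eps/n$ and, instead of counting sign flips over the cell, bounds for each $i$ the probability that \emph{some} perturbations $\vZ,\vZ'\in\rho\bb B_F^{n\times n}$ make the signs of $\langle\vA_i,\vX_k+\vZ\rangle_F$ and $\langle\vA_i,\vX_{k'}+\vZ'\rangle_F$ agree; by a lemma of Jacques et al.\ this probability is at most $1-\f{1}{\pi}\arccos\langle\vX_k,\vX_{k'}\rangle_F+\sqrt{\pi/2}\,n\rho$, and the factor $n$ (coming from $\bE\|\vA_i\|_F$) is exactly what forces $\rho\asymp\eps/n$. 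Swapping ``$\exists\vZ\,\forall i$'' for ``$\forall i\,\exists\vZ$'' restores independence across $i$ and yields the exponential bound. The price is the strengthened assumption $m \ge \f{C}{\eps}\cl H(\cl K, c\eps/n)$, which is harmless for the application since the entropy of $\ubilr{s}{r}$ depends only logarithmically on the covering radius. If you want a complete self-contained proof, either adopt that weakened version or commit to a full chaining argument; the intermediate route you sketch does not close.
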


This result is related, but not strictly equivalent, to Theorem~2.4 from \cite{oymak2015near}. 
Under a requirement similar to~\eqref{eq:sample-complex-bound-consist-implic}, it states that a Gaussian random map $\Amap$ defines, with high probability, a \emph{local} $\eps$-isometry of $\cl K$. In particular, it implies that, for all $\bs X, \bs X' \in \cl K$ such that $\|\bs X - \bs X'\|_F \geq \eps$, 
one has
\begin{equation}
\label{eq:local-isometry}    
\ts |\f{1}{m} \|\bAmap \bs X,\bAmap \bs X'\|_{H} - \f{1}{\pi} \arccos \langle \bs X, \bs X'\rangle| \geq c \eps,
\end{equation}
with $\|\bs s, \bs s'\|_{H}:=|\{i \in \range{m}: s_i \neq s'_i\}|$ for $\bs s,\bs s' \in \{\pm 1\}^m$. 
Note that \eqref{eq:local-isometry} does not yield the contraposition of \eqref{eq:consist-implic}, since it does not enforce that $\bAmap \bs X \neq \bAmap \bs X'$. 

\begin{proof}[Proof of Theorem~\ref{ThmCWUp}]
The stated version appears in \cite[Theorem 1.5]{BL15}.
For the sake of diversity of the arguments,
we follow the idea from \cite[Theorem~2]{JLBB}
and prove a result that would still imply the upper bound in \eqref{eq:two-bounds-for-CW},
but does require strengthening the assumption \eqref{eq:sample-complex-bound-consist-implic} to
\begin{equation}
\label{strengthened}
\ts m \ge \f{C}{\eps} \cl H\big(\cl K, c\,\f{\eps}{n}\big).
\end{equation}
We shall prove that $\bP_0 \le \exp(-c \eps m)$, where 
$$
\bP_0 := \small \bP\big(\exists \vX,\vX'\!\in \cl K: \|\vX - \vX'\|_F > \eps,\ 
\bAmap(\vX) = \bAmap(\vX')\big).
$$
For a number $\rho \in (0,\eps/2)$ yet to be chosen,
we consider a $\rho$-covering $\{\vX_1,\,\cdots,\vX_K\}$ of $\cl K$ with $K = \exp \cl H(\cl K, \rho)$. 
Given $\vX,\vX' \in \cl K$ satisfying 
$\|\vX - \vX'\|_F > \eps$ and $\bAmap(\vX) = \bAmap(\vX')$,
we select $k,k' \in \range{K}$ such that $\|\vX - \vX_k \|_F \le \rho$ and $\| \vX' - \vX_{k'} \|_F \le \rho$,
implying that
$\|\vX_k - \vX_{k'} \|_F \ge \eps - 2 \rho$
and that $\bAmap(\vX_k + \vZ) = \bAmap(\vX_{k'} + \vZ')$
where $\vZ := \vX - \vX_k$ and $\vZ' := \vX' - \vX_{K'}$ both have norm at most $ \rho$.
It follows that 
\begin{align*}
\bP_0 \le \bP\big( &\exists k,k' \in \range{K}: \|\vX_k - \vX_{k'}\|_F \ge \eps - 2 \rho,\\
& \exists \bs Z, \bs Z' \in \rho \bb B^{n \times n}_F:
\bAmap(\vX_k + \vZ) = \bAmap(\vX_{k'} + \vZ')\big).
\end{align*}
With $\Omega := \{ (k,k') \in \range{K}^2: \|\vX_k - \vX_{k'} \|_F \ge \eps - 2 \rho \}$, a union bound then gives
$$
\ts \bP_0 \leq \sum_{(k,k') \in \Omega} \bP_{k,k'},
$$
where the summands are
\begin{align*}
\bP_{k,k'} &\small := \bP \big(\exists \bs Z, \bs Z' \in \rho \bb B^{n \times n}_F:  \bAmap(\vX_k + \vZ) = \bAmap(\vX_{k'} + \vZ')\big)\\
&\small=\bP \big(\exists \bs Z, \bs Z' \in \rho \bb B^{n \times n}_F: \forall i \in \range{m},\\
&\small\hspace{9mm}
\sgn(\langle \vA_i, \vX_k + \vZ \rangle_F)
= \sgn(\langle \vA_i, \vX_{k'} + \vZ' \rangle_F) \big).
\end{align*}
Let us observe that, for $(k,k') \in \Omega$,
\begin{align*}
\bP_{k,k'} & \le \bP \big( \forall i \in \range{m}, \exists \bs Z, \bs Z' \in \rho \bb B^{n \times n}_F:\\
& \hspace{9mm}
\sgn(\langle \vA_i, \vX_k + \vZ \rangle_F)
= \sgn(\langle \vA_i, \vX_{k'} + \vZ' \rangle_F)  \big)\\
&\ts =\prod_{i=1}^m
\bP \big(\exists \bs Z, \bs Z' \in \rho \bb B^{n \times n}_F:\\
& \hspace{9mm}
\sgn(\langle \vA_i, \vX_k + \vZ \rangle_F)
= \sgn(\langle \vA_i, \vX_{k'} + \vZ' \rangle_F) \big) \\
&\ts \le \prod_{i=1}^m  \Big( 1 - \frac{1}{\pi} \arccos(\langle \vX_k, \vX_{k'} \rangle_F ) + \sqrt{\f{\pi}{2}} n \rho \Big),
\end{align*}
where the last step was an application of \cite[Lemma 9]{JLBB}.
In view of $\arccos(\langle \vX_k, \vX_{k'} \rangle_F ) \ge \|\vX_k - \vX_{k'}\|_F \ge \eps - 2 \rho$ and of the choice $\rho = c \eps / n$ for an appropriate $c$, we obtain
$$
\ts \bP_{k,k'} \le \prod_{i=1}^m \big( 1 - \f{\eps}{2 \pi} \big)
\le \exp( - \f{\eps}{2 \pi} m).
$$
Altogether, we conclude that
$$
\ts \bP_0 \le K^2 \exp(- \f{\eps}{2\pi} m) = \exp(2 \cl H(\cl K, c \f{\eps}{n}) - \f{\eps}{2\pi} m),
$$
which readily implies the result under the assumption \eqref{strengthened}.
\end{proof}

Interestingly, an explicit bound for the Kolmogorov entropy of the set of low-rank and bisparse matrices can be obtained as follows.

\begin{proposition}
  \label{prop:cover-number-bisparse-lowrank}
 Given $r,s\in \range{n}$ with $r \le s$ and $\eta > 0$, one has 
  \begin{align*}
  \cl H\big(\ubilr{s}{r}, \eta)&\leq 2s \ln(en/s) + r(2s + 1) \ln(9/\eta)\\
  &\leq C rs \ln \big(n/ (\eta s)\big).
  \end{align*}
\end{proposition}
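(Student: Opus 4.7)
The plan is to produce the required $\eta$-cover as a two-layer union: an outer layer over bisparsity patterns $(S,T)$, and, for each such pattern, an inner SVD-based cover of the set of rank-$r$ matrices supported on $S\times T$.

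First, I would decompose
$$
\ubilr{s}{r} \;=\; \bigcup_{S,T \inc \range{n},\,|S|=|T|=s} \ubilrset{S\times T}{r},
$$
which reduces matters to building, for each fixed pair $(S,T)$, an $\eta$-cover of $\ubilrset{S\times T}{r}$ in Frobenius norm. Since $\binom{n}{s}^2 \le (en/s)^{2s}$, taking the union over patterns contributes the first summand $2s\ln(en/s)$ to the log-cardinality.

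For the inner layer, I would invoke the classical SVD-based cover of bounded-rank matrices. After identifying elements of $\ubilrset{S\times T}{r}$ with $s\times s$ blocks, each such $\vX$ admits a reduced SVD $\vU\bs\Sigma\vV^\top$, with $\vU,\vV\in\bR^{s\times r}$ having orthonormal columns and $\bs\Sigma\in\bR^{r\times r}$ diagonal with $\|\bs\Sigma\|_F=1$. A standard volumetric argument yields an $(\eta/3)$-net of the operator-norm unit ball of $\bR^{s\times r}$ of cardinality at most $(9/\eta)^{sr}$ in operator norm, and an $(\eta/3)$-net of the Frobenius unit ball of the $r$-dimensional space of diagonal matrices of cardinality at most $(9/\eta)^{r}$ in Frobenius norm. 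The triangle inequality combined with $\|\vA\vB\|_F \le \|\vA\|_{\mathrm{op}}\|\vB\|_F$ and the normalizations $\|\vU\|_{\mathrm{op}}=\|\vV\|_{\mathrm{op}}=\|\bs\Sigma\|_F=1$ then shows that the corresponding product net approximates $\vX$ to within $\eta$ in Frobenius norm. This produces an inner cover of log-size at most $r(2s+1)\ln(9/\eta)$.

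Combining the two layers via a union bound delivers the first inequality. The second is a routine simplification: using $r \le s$, one bounds $2s \le 2rs$ and $r(2s+1) \le 3rs$, merges $\ln(en/s)+\ln(9/\eta) = \ln(9en/(\eta s))$, and absorbs the absolute constant $9e$ into $C$. I do not anticipate any real obstacle here; the only technical input beyond elementary counting is the SVD-based factor cover, which is by now a standard ingredient in low-rank matrix analysis, and the small discrepancy between external and internal covers (the definition of $\cl H$ requires centers in $\cl K$) is absorbed into the absolute constants.
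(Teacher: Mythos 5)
Your proof is correct and takes essentially the same route as the paper: decompose $\ubilr{s}{r}$ as the union over the $\binom{n}{s}^2$ support patterns, cover each $\ubilrset{S\times T}{r}$ by a $(9/\eta)^{r(2s+1)}$-point net, and combine via Stirling's bound. The SVD-based factor cover you reconstruct is precisely the content of the lemma the paper simply cites (\cite[Lemma~3.1]{CanPla}), so the two arguments coincide up to your spelling out that ingredient.
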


\begin{proof}
Fixing subsets $S,T$ of $\range{n}$ with $|S| = |T| = s$, 
one can find
an $\eta$-covering $\cl G_{S,T}$ of
$\ubilrset{S \times T}{r}$ with cardinality $|\cl G_{S,T}| \leq (9/\eta)^{r(2s+1)}$, see \cite[Lemma 3.1]{CanPla}.
In view of $\ubilr{s}{r}\ =\ \bigcup_{S,T} \ubilrset{S}{r}$,
an $\eta$-covering of $\ubilr{s}{r}$ is then given by $\ts \cl G := \bigcup_{S,T } \cl G_{S,T}$. 
Using Stirling's bound, we derive that $\cl H(\ubilr{s}{r}, \eta) \leq \ln |\cl G| \leq \ln\big[ \binom{n}{s}^2 (9/\eta)^{r(2s +1)} \big] \leq 2s \ln(en/s) + r(2s + 1) \ln(9/\eta)$, which concludes the proof.
\end{proof}

 Theorem~\ref{ThmCWUp} and Proposition~\ref{prop:cover-number-bisparse-lowrank}
 combine into the following result. 
\begin{corollary}
\label{cor:consist-implic-bilr}
Let $\bs{\cl A}: \bb R^{n \times n} \to \bb R^m$ be a Gaussian random map.
Given $\eps>0$, if
\begin{equation}
\label{eq:cond-m-consist-bilr}
\ts m \ge \f{C}{\eps} rs \ln (\f{n}{\eps s}),
\end{equation}
then it occurs with probability at least $ 1-\exp(-c \eps m)$ that \eqref{eq:consist-implic} is fulfilled 
for all $\vX,\vX' \in \cl K$.
\end{corollary}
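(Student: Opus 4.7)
The plan is to specialize Theorem~\ref{ThmCWUp} to $\cl K = \ubilr{s}{r}$ by substituting the Kolmogorov entropy estimate from Proposition~\ref{prop:cover-number-bisparse-lowrank} into the sample complexity condition \eqref{eq:sample-complex-bound-consist-implic}. Since the two ingredients have just been proved, the corollary should essentially drop out of a chaining argument.

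Concretely, I would evaluate Proposition~\ref{prop:cover-number-bisparse-lowrank} at the scale $\eta = c\eps$, where $c$ is the absolute constant appearing on the right-hand side of \eqref{eq:sample-complex-bound-consist-implic}. This yields
$$
\cl H(\ubilr{s}{r}, c\eps) \ \le\ C' rs \ln(n/(\eps s)),
$$
after absorbing the additive constant $\ln(1/c)$ into the overall multiplicative factor (using that $\ln(n/(\eps s)) \gtrsim 1$ in the meaningful regime; outside that regime the covering is trivial). Hence, for a sufficiently large absolute constant $C$, the hypothesis \eqref{eq:cond-m-consist-bilr} of the corollary forces $m \ge (C/\eps)\, \cl H(\ubilr{s}{r}, c\eps)$, and Theorem~\ref{ThmCWUp} applies with $\cl K = \ubilr{s}{r}$ to deliver the conclusion \eqref{eq:consist-implic} with probability at least $1 - \exp(-c\eps m)$.

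No substantive obstacle is anticipated, as the argument is a mechanical composition of two previously established results. The only care needed is the bookkeeping of absolute constants $C$ and $c$, which per the paper's convention may be renamed freely from line to line. If instead one preferred to invoke the strengthened assumption \eqref{strengthened} from the alternative proof of Theorem~\ref{ThmCWUp}, the scale would become $\eta = c\eps/n$ and Proposition~\ref{prop:cover-number-bisparse-lowrank} would produce $\ln(n^2/(\eps s))$, which is bounded by $2\ln(n/(\eps s))$ whenever $\eps s \le 1$, so the statement of the corollary still holds with an adjusted constant.
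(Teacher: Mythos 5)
Your proposal is correct and matches the paper's own (implicit) argument exactly: the paper simply states that Theorem~\ref{ThmCWUp} and Proposition~\ref{prop:cover-number-bisparse-lowrank} ``combine into'' the corollary, which is precisely your substitution of the entropy bound at scale $\eta = c\eps$ into \eqref{eq:sample-complex-bound-consist-implic}, with the constant $\ln(1/c)$ absorbed as you describe. Your side remark about the strengthened assumption \eqref{strengthened} yielding $\ln(n^2/(\eps s)) \le 2\ln(n/(\eps s))$ for $\eps s \le 1$ is also accurate.
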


With Corollary \ref{cor:consist-implic-bilr} at hand, we can justify the upper bound in \eqref{eq:two-bounds-for-CW}.
For this purpose, consider $\eps > 0$ making \eqref{eq:cond-m-consist-bilr}
an equality.
The assumption of Theorem~\ref{ThmCWUp} then holds,  
so there is a linear map $\Amap: \bb R^{n \times n} \to \bb R^m$ such that
$$
\sup_{\substack{\vX,\vX' \in \ubilr{s}{r}\\\bAmap(\vX) = \bAmap(\vX')}}\ts
\|\vX - \vX'\|_F\ \le\ \eps.
$$
We conclude by remarking that 
$$
\ts \eps = \f{C}{m} rs \ln(\f{n}{\eps s})
\geq C \f{rs}{m}, \quad \mbox{and in turn} \quad
\eps \le \f{C}{m} rs\,\ln (\f{cnm}{rs}),
$$
which implies the desired upper bound on $\CW^m(\ubilr{s}{r})$.

\section{Idealized algorithm for Gaussian sensing}
\label{SecIHT}

Corollary~\ref{cor:consist-implic-bilr} --- as well as Theorem~\ref{ThmCWUp} for more general sets --- says that any $\vX \in \ubilr{s}{r}$ is well approximated by any other $\vX' \in \ubilr{s}{r}$ sharing the same one-bit measurements $\vy \in \{\pm 1\}^m$.
It does not, however, exhibit a scheme to create an approximant from $\vy$.
We now provide such a scheme
by considering the \emph{projected back projection}  
\be
\label{DefHT}
\vX' = \cl P_{\bilr{s}{r}}(\bm{\cA}^*\vy). 
\ee
Here, $\cl P_{\bilr{s}{r}}$ denotes the projection onto the set $\bilr{s}{r}$,
but we stress from the onset that computing it is not a realistic task, 
see \cite{FGR} for a related NP-hardness statement.
We note also that $\vX'$ need not be consistent, 
in the sense that $\bAmap(\vX')$ is not necessarily equal to $\vy$,
but that it has the same structure as $\vX$.
The essential tool behind our argument is a restricted isometry property for the set $\bilr{s}{r}$,
established below for Gaussian measurements.

\begin{theorem}
\label{ThmStandardRIP}
Let $\Amap: \bb R^{n \times n} \to \bb R^m$ be a Gaussian random map.
Given $\delta \in (0,1)$,
if
\be
\label{AssumGenRIP}
\ts m \ge \f{C}{\delta^2}  r s \ln ( \f{n}{s} ),
\ee
then it occurs with probability at least $ 1 - 2 \exp(- c \delta^2 m)$ that,
for all $\vZ \in \bilr{s}{r}$,
\be
\label{GenRIP}
\ts (1-\delta) \|\vZ\|_F
\le \f{\sqrt{\pi /2}}{m}\|\bm{\cA}(\vZ)\|_1
\le (1+\delta) \|\vZ\|_F.
\ee
\end{theorem}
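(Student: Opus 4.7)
The plan is the textbook three-step proof of an $\ell_1/\ell_2$ RIP on a structured set: pointwise concentration, a union bound over a covering, and a perturbation step. Abbreviate $Q(\vZ) := \tfrac{\sqrt{\pi/2}}{m}\|\Amap(\vZ)\|_1$, so that \eqref{GenRIP} is exactly $\bigl|Q(\vZ) - \|\vZ\|_F\bigr| \le \delta\,\|\vZ\|_F$ for every $\vZ \in \bilr{s}{r}$.

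For the pointwise ingredient, fix $\vZ$ with $\|\vZ\|_F = 1$. The scalars $\xi_i := \langle \vA_i, \vZ\rangle_F$ are iid $\mathcal{N}(0,1)$, so the summands $\sqrt{\pi/2}\,|\xi_i|$ are iid folded normals with mean $1$ and sub-Gaussian norm of order one. Hoeffding's inequality then delivers
\[
\mathbb{P}\bigl(|Q(\vZ) - 1| > t\bigr)\ \le\ 2\exp(-c m t^2), \qquad t \in (0,1).
\]
I would then invoke Proposition~\ref{prop:cover-number-bisparse-lowrank} to build a Frobenius $\eta$-net $\cG$ of $\ubilr{2s}{2r}$ (a slight enlargement chosen so that differences of elements of $\ubilr{s}{r}$ remain inside the covered family) with $\log|\cG| \le C rs \ln(n/(\eta s))$, set $\eta$ to a small multiple of $\delta$, and union-bound the pointwise estimate at level $t = \delta/4$. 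The event that $|Q(\vZ_0) - 1| \le \delta/4$ fails for some $\vZ_0 \in \cG$ then has probability at most $2\exp\bigl(Crs\ln(n/(\delta s)) - cm\delta^2/16\bigr)$, which the hypothesis $m \ge (C/\delta^2) rs \ln(n/s)$ reduces to $2\exp(-c'\delta^2 m)$ after absorbing the harmless $\ln(1/\delta)$ into constants.

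The remaining and most delicate step is to promote this net-level bound to a uniform estimate on $\ubilr{s}{r}$. The main obstacle is that the residual $\vW := \vZ - \vZ_0$ lies only in $\bilr{2s}{2r}$, so both the rank and the sparsity double under subtraction and the perturbation leaves the model set. I plan to close the loop by a bootstrap. Setting $M := \sup\{Q(\vV) : \vV \in \ubilr{2s}{2r}\}$, I split each residual $\vW$ via its singular value decomposition as an orthogonal sum $\vV_1 + \vV_2$ with each $\vV_j$ of rank at most $r$ and bisparse in the combined supports, so that $\|\vV_1\|_F + \|\vV_2\|_F \le \sqrt{2}\,\|\vW\|_F \le \sqrt{2}\,\eta$ and each $\vV_j/\|\vV_j\|_F$ falls in the covered family (up to enlarging the cover by a constant factor in sparsity, which does not affect the entropy estimate). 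Subadditivity and positive homogeneity of $Q$ then yield a self-bounding inequality of the form
\[
M\ \le\ 1 + \tfrac{\delta}{4} + \sqrt{2}\,\eta\, M,
\]
which, for $\eta$ a sufficiently small multiple of $\delta$, forces $M \le 1 + \delta$. The lower bound in \eqref{GenRIP} follows from the reverse triangle inequality together with the same control on $M$.
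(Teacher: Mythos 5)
Your architecture (pointwise concentration for the folded normal, union bound over a net, perturbation step) is the ``classical proof'' that the paper explicitly declines to carry out in detail, opting instead for Schechtman's Gaussian-width criterion $m \ge C\delta^{-2} w(\ubilr{s}{r})^2$ together with a bound on $w(\ubilr{s}{r})$ via the union-of-subsets lemma. That choice is not cosmetic: the covering route costs an extra $\ln(1/\delta)$, since your net has log-cardinality $Crs\ln(n/(\eta s))$ with $\eta \asymp \delta$, so the union bound forces $m \gtrsim \delta^{-2}rs\bigl(\ln(n/s)+\ln(1/\delta)\bigr)$. You cannot ``absorb $\ln(1/\delta)$ into constants''--- it is unbounded as $\delta \to 0$ --- so your argument proves the theorem only under a slightly stronger hypothesis than \eqref{AssumGenRIP} (or with a worse power of $\delta$). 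The paper flags exactly this caveat (``if we overlook the exact powers of $\delta$'') as its reason for using the Gaussian-width argument instead.

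The more serious problem is that your bootstrap does not close. You define $M := \sup\{Q(\vV): \vV \in \ubilr{2s}{2r}\}$, and to get a self-bounding inequality for $M$ you must run the net-plus-residual argument for an arbitrary $\vV \in \ubilr{2s}{2r}$. Its residual against a net of $\ubilr{2s}{2r}$ lies in $\bilr{4s}{4r}$, and the SVD split tames only the \emph{rank}: the two pieces $\vV_1,\vV_2$ still have row and column supports of size up to $4s$, so $\vV_j/\|\vV_j\|_F \in \ubilr{4s}{2r} \not\subseteq \ubilr{2s}{2r}$, and the right-hand side of your inequality involves a supremum over a strictly larger set than the one defining $M$. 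Enlarging the cover once does not help, because the enlarged set has residuals that are larger still. The standard repair is to localize: fix supports $S,T$ with $|S|=|T|=s$, set $M_{S,T} := \sup\{Q(\vV): \vV \in \ubilrset{S\times T}{2r}\}$, and note that $\bilrset{S \times T}{\cdot}$ \emph{is} closed under addition in support, so the residual stays in $\bilrset{S\times T}{4r}$ and the SVD split returns two pieces in $\bilrset{S\times T}{2r}$; the bootstrap then closes for each $M_{S,T}$ separately, and you finish with a union bound over the $\binom{n}{s}^2$ support pairs using the Cand\`es--Plan covering of each fixed-support ball. This is precisely the combination the paper's first paragraph of the proof alludes to.
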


\begin{proof}
If we overlook the exact powers of $\delta$,
the classical proof consisting of a concentration inequality followed by a covering argument will go through ---
the key ingredient being the estimate of \cite[Lemma~3.1]{CanPla} for the covering number of the ball of $\bilrset{S \times T}{r}$ combined with a union bound over all index sets $S,T$ of size~$s$.

Alternatively, in terms of Gaussian width, we can invoke the result from \cite{Sch} (see also \cite{plan2014dimension}) which guarantees that \eqref{GenRIP} holds provided
\be
\label{CondSch} 
m \ge C \delta^{-2} w \big(\ubilr{s}{r} \big)^2.
\ee
Then, in view of $w(\cup_{k=1}^K \cl K_k) \leq \max_k w(\cl K_k) + 3 \sqrt{\ln K}$ for all subsets $\cl K_1, \ldots, \cl K_K$ of the sphere
(see \cite[Lemma 12(ii)]{{BFNPW-Dict}}) and of the fact that $\ubilr{s}{r} = \cup_{S,T} \ubilrset{S \times T}{r}$ with $w(\ubilrset{S \times T}{r})^2 \leq C rs$, we deduce 
$$
\ts w\big(\ubilr{s}{r} \big)\leq C \sqrt{rs} + 3\sqrt{2 s \ln{(\f{en}{s})}} \leq C \sqrt{rs \ln(\f{n}{s})}.
$$
It is clear that \eqref{CondSch} holds under the assumption \eqref{AssumGenRIP}.
\end{proof}

We now state and prove the main result of this section.
The argument is an adaptation of the technique presented in \cite[Section~8.4]{Flavors}.

\begin{theorem}
\label{thm:PBP-error}
Assuming that \eqref{GenRIP} holds on $\ubilr{2s}{2r}$,
any $\vX \in \ubilr{s}{r}$ acquired via $\vy = \bAmap(\vX)$
is approximated by the matrix $\vX'$ defined in \eqref{DefHT}
 with error
\begin{equation}
\label{eq:DefHTbound}
\|\vX - \vX'\|_F \le C \sqrt{\delta}.
\end{equation}
\end{theorem}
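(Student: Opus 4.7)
\smallskip
\noindent\emph{Proof plan.} The plan is to carry out the projected back-projection (PBP) analysis for this one-bit setting. Since $\bilr{s}{r}$ is a positively homogeneous cone, I would first rescale and equivalently view $\vX' = \cl P_{\bilr{s}{r}}(\vM)$ with $\vM := \frac{\sqrt{\pi/2}}{m}\bm{\cA}^*\vy$. Expanding $\|\vM-\vX'\|_F^2 \le \|\vM-\vX\|_F^2$ and using the cone first-order condition $\langle \vM,\vX'\rangle_F = \|\vX'\|_F^2$ yields the workhorse
\[
\|\vX-\vX'\|_F^2\ \le\ 2\langle \vM-\vX,\,\vX'-\vX\rangle_F.
\]

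The key algebraic fact is that $y_i\langle \vA_i,\vX\rangle_F = |\langle \vA_i,\vX\rangle_F|$ by definition of $\vy$, so $\langle \vM,\vX\rangle_F = \frac{\sqrt{\pi/2}}{m}\|\bm{\cA}(\vX)\|_1 \in [1-\delta,\,1+\delta]$ by \eqref{GenRIP}. For arbitrary $\vZ \in \bilr{s}{r}$, the chain $\langle \vM,\vZ\rangle_F \le \frac{\sqrt{\pi/2}}{m}\|\bm{\cA}(\vZ)\|_1 \le (1+\delta)\|\vZ\|_F$ only gives a one-sided bound. Writing $\vX' = \|\vX'\|_F\hat\vX$ with $\hat\vX := \vX'/\|\vX'\|_F \in \ubilr{s}{r}$, the identity $\|\vX'\|_F = \langle \vM,\hat\vX\rangle_F$ (again from the cone condition) together with the maximality $\langle \vM,\hat\vX\rangle_F \ge \langle \vM,\vX\rangle_F$ and the above chain applied to $\hat\vX$ pin $\|\vX'\|_F$ to $[1-\delta,\,1+\delta]$.

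The heart of the argument is then to upgrade the scalar estimate $\langle \vM,\hat\vX-\vX\rangle_F = \|\vX'\|_F - \langle \vM,\vX\rangle_F \in [0,\,2\delta]$ into the quadratic bound $\|\hat\vX-\vX\|_F^2 \le C\delta$. Since $\hat\vX-\vX \in \bilr{2s}{2r}$, \eqref{GenRIP} is at my disposal on this difference. The strategy is to split the sum $\sum_i y_i\langle \vA_i,\hat\vX-\vX\rangle_F$ into contributions from the ``sign-agreeing'' indices $I_+ = \{i : \sgn\langle \vA_i,\vX\rangle_F = \sgn\langle \vA_i,\hat\vX\rangle_F\}$ and the ``sign-disagreeing'' ones $I_-$; on $I_-$ one has the pointwise reinforcement $|\langle \vA_i,\hat\vX-\vX\rangle_F| = |\langle \vA_i,\vX\rangle_F|+|\langle \vA_i,\hat\vX\rangle_F|$, which feeds the extra factor of $\|\hat\vX-\vX\|_F$ into the estimate when combined with the lower RIP bound on $\bm{\cA}(\hat\vX-\vX)$. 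The triangle inequality $\|\vX-\vX'\|_F \le \|\vX-\hat\vX\|_F + |1-\|\vX'\|_F|$ with $|1-\|\vX'\|_F|\le \delta$ then delivers \eqref{eq:DefHTbound}.

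The main obstacle I anticipate is precisely this last step: a priori, the $\ell_1$-RIP only yields the linear-in-norm inequality $\langle \vM,\vW\rangle_F \le (1+\delta)\|\vW\|_F$, so the quadratic control $\|\hat\vX-\vX\|_F^2 = O(\delta)$ must be wrung out of the sign-disagreement decomposition, and matching the exponent $\tfrac12$ on $\delta$ requires a careful balance between the two sides of \eqref{GenRIP} and this combinatorial term.
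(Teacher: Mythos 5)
Your setup is sound and matches the paper's opening moves: the workhorse inequality $\|\vX-\vX'\|_F^2\le 2\langle \vM-\vX,\vX'-\vX\rangle_F$, the identity $\langle \vM,\vX\rangle_F=\frac{\sqrt{\pi/2}}{m}\|\Amap(\vX)\|_1\ge 1-\delta$ coming from $y_i\langle\vA_i,\vX\rangle_F=|\langle\vA_i,\vX\rangle_F|$, and the pinning $\|\vX'\|_F\in[1-\delta,1+\delta]$ are all correct and are exactly the ingredients the paper uses. But the step you yourself flag as ``the heart'' --- converting the scalar estimate $\langle\vM,\hat\vX-\vX\rangle_F\in[0,2\delta]$ into $\|\hat\vX-\vX\|_F^2\le C\delta$ --- is left as an unexecuted sketch, and the mechanism you propose for it does not close. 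Splitting over $I_\pm$ only shows that the mass $\sum_{i\in I_-}|\langle\vA_i,\hat\vX\rangle_F|$ is $O(\delta m)$; when you then try to lower-bound $\|\Amap(\hat\vX-\vX)\|_1$ via \eqref{GenRIP}, the dominant contribution $\sum_{i\in I_+}\big||\langle\vA_i,\hat\vX\rangle_F|-|\langle\vA_i,\vX\rangle_F|\big|$ is completely uncontrolled: it can be as large as $\Theta(m\|\hat\vX-\vX\|_F)$ through internal cancellation in $\|\Amap\hat\vX\|_1-\|\Amap\vX\|_1$, so the inequality degenerates to something circular. The $\ell_1$-RIP carries no second-moment information, so sign-disagreement alone cannot force $\hat\vX$ close to $\vX$.

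The missing idea in the paper is a \emph{projection onto a low-dimensional subspace}: with $\cS:=\spa\{\vu_k\vv_k^\top,\vu_k'\vv_k'^\top\}\inc\bilr{2s}{2r}$ built from the singular vectors of $\vX$ and $\vX'$, one has $\cl P_\cS(\vX-\vX')=\vX-\vX'$, so Cauchy--Schwarz gives $\|\vX-\vX'\|_F\le 2\|\vX-\cl P_\cS(\vM)\|_F$, and the upper half of \eqref{GenRIP} applied to $\cl P_\cS(\vM)\in\bilr{2s}{2r}$ yields $\|\cl P_\cS(\vM)\|_F^2=\frac{\sqrt{\pi/2}}{m}\langle\vy,\Amap(\cl P_\cS(\vM))\rangle\le(1+\delta)\|\cl P_\cS(\vM)\|_F$; expanding $\|\vX-\cl P_\cS(\vM)\|_F^2\le 1+(1+\delta)^2-2(1-\delta)\le C\delta$ then finishes. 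That said, your intermediate facts admit a short repair that bypasses both your sign-splitting and the paper's subspace: since $\vX+\hat\vX\in\bilr{2s}{2r}$, the chain $2(1-\delta)\le\langle\vM,\vX\rangle_F+\langle\vM,\hat\vX\rangle_F=\langle\vM,\vX+\hat\vX\rangle_F\le(1+\delta)\|\vX+\hat\vX\|_F$ gives $\|\vX+\hat\vX\|_F\ge 2-4\delta$, whence the parallelogram law yields $\|\vX-\hat\vX\|_F^2=4-\|\vX+\hat\vX\|_F^2\le 16\delta$, and $\|\hat\vX-\vX'\|_F=|1-\|\vX'\|_F|\le\delta$ concludes. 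As written, however, the proposal has a genuine gap at the decisive step.
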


\begin{proof} 
We write the singular value decompositions of $\vX$ and $\vX'$ as
$\ts\vX = \sum_{k=1}^r \sigma_k \vu_k \vv_k^\top$ and $\vX' = \sum_{k=1}^{r} \sigma_k \vu'_k \vv_k'^\top$,
with the vectors $\vu_k,\vv_k$ being supported on some $S,T$, $|S|=|T|= s$,
and the vectors $\vu'_k, \vv'_k$ being supported on some $S',T'$, $|S'|=|T'|=s$.
Let also $\cl P_\cS$ denote the orthogonal projection onto the linear space
$\cS = {\rm span}\left\{ \vu_1 \vv_1^\top,\ldots, \vu_r \vv_r^\top, \vu'_1 \vv_1'^\top,\ldots, \vu'_r \vv_r'^\top \right\} \inc \bilr{2s}{2r}$.
Notice that $\cl P_\cS(\vX) = \vX$ and $\cl P_\cS(\vX-\vX')=\vX-\vX'$ since $\vX \in \cl S$ and $\vX-\vX' \in \cl S$. 
Because the matrix $\vX'$ is the best approximant to $\bs \cA^* \bAmap(\vX)$ from $\bilr{s}{r}$,
it is a better approximant to $\bs \cA^* \bAmap(\vX)$ than $\vX$ is, \ie
$\| \bs \cA^* \bAmap(\vX)  - \vX' \|_F^2
\le \|\bs \cA^* \bAmap(\vX) - \vX \|_F^2$.
Introducing $\vX$ in the left-hand side and expanding the square yields
\begin{align*}
\|\vX - \vX'\|_F^2 & \le 2 \langle \vX - \vX', \vX - \bs \cA^* \bAmap(\vX) \rangle_F
\\
& = 2 \langle \cl P_\cS(\vX - \vX'), \vX - \bs \cA^* \bAmap(\vX) \rangle_F\\
& = 2 \langle \vX - \vX', \vX - \cl P_\cS(\bs\cA^* \bAmap(\vX)) \rangle_F\\
& \le 2  \|\vX - \vX'\|_F \| \vX - \cl P_\cS(\bs\cA^* \bAmap(\vX)) \|_F,
\end{align*}
which implies that
\be
\label{-1}
\|\vX - \vX'\|_F \le 2 \| \vX - \cl P_\cS(\bs\cA^* \bAmap(\vX)) \|_F .
\ee
Taking $\|\vX\|_F^2 = 1$ into account,
we remark that
\begin{align}
\nonumber
\|\vX - \cl P_\cS(\bs\cA^* \bAmap & (\vX)) \|_F^2 = 1+ \|\cl P_\cS(\bs\cA^* \bAmap(\vX))\|_F^2\\
\label{0}
&
 - 2 \langle \vX,  \cl P_\cS(\bs\cA^* \bAmap(\vX)) \rangle_F. 
\end{align}
Now, thanks to $\cl P_\cS(\bs\cA^* \bAmap(\vX)) \in \bilr{2s}{2r}$, we have
\begin{align*}
\nonumber\|\cl P_\cS(\bs\cA^*  \bAmap(\vX)) \|_F^2
&=\langle \bs\cA^* \bAmap(\vX), \cl P_\cS(\bs\cA^* \bAmap(\vX))  \rangle_F\\
\nonumber&=\langle \bAmap( \vX), \bs\cA( \cl P_\cS(\bs\cA^* \bAmap(\vX)) )  \rangle\\
\nonumber&\le \|\bs\cA(\cl P_\cS(\bs\cA^* \bAmap(\vX)))\|_1 \\
&\le (1+\delta)   \|\cl  P_\cS(\bs\cA^* \sgn(\cA \vX)) \|_F,
\end{align*}
which simplifies to
\be
\label{3}
\|\cl P_\cS(\bs\cA^* \sgn(\cA \vX)) \|_F \le 1+\delta .
\ee
Besides, thanks to $\cl P_\cS(\vX) = \vX$ and $\vX \in \bilr{s}{r}$, 
we also have
\begin{align}
\nonumber
\langle \vX,  \cl P_\cS(\bs\cA^* \sgn&(\cA \vX)) \rangle_F
 = \langle \vX,  \bs\cA^* \sgn(\cA \vX) \rangle_F\\
\label{2}
& = \langle \cA \vX,  \sgn(\cA \vX) \rangle
= \|\cA \vX\|_1 
\ge (1-\delta).
\end{align}
Putting \eqref{3} and \eqref{2} together in \eqref{0} gives
$$
\| \vX - \cl P_\cS(\bs\cA^* \bAmap(\vX)) \|_F^2
\le 1 + (1+\delta)^2 - 2(1-\delta) 
\le C \delta.    
$$
Finally, substituting in \eqref{-1}, we arrive at
$\|\vX - \vX' \|_F \le C \sqrt{\delta}$.
\end{proof}

To close this section,
we reformulate \eqref{eq:DefHTbound} by stating that, with high probability,
the recovery error  for projected  back projection with Gaussian random maps achieves the decay rate
$$
\|\vX - \vX'\|_F \le C \left( \f{r s \ln(n/s)}{m} \right)^{1/4},
$$
since the restricted isometry property \eqref{GenRIP} holds in the regime \eqref{AssumGenRIP}.

\section{Factorized Gaussian sensing}
\label{Fapractalgo}
While the projected back projection scheme \eqref{DefHT} is impractical,
changing the sensing map $\Amap: \bb R^{n \times n} \to \bb R^m$ 
allows for the design of a practical algorithm to recover matrices in $\ubilr{s}{r}$ with provable recovery error decaying as a root of  $rs \ln(n/s)/m$.  
The strategy follows a multistep argument put forward in \cite{iwen2017robust} for sparse phase retrieval
and also imitated in \cite{FGR} for low-rank and bisparse recovery from
standard, \ie not quantized, measurements.
It consists in defining a sensing map $\Amap: \bb R^{n \times n} \to \bb R^m$ associated with matrices 
$$
\bs A_i := \bs B^\top \bs A'_i \bs C \in \bb R^{n \times n},
\quad i \in \lbt m \rbt,
$$
where $\bs A'_1,\ldots, \bs A'_m \in \bb R^{p \times p}$
and $\bs B, \bs C \in \bb R^{p \times n}$ for some $p$ chosen later.
The one-bit measurements made on $\bs X \in \bb R^{n \times n}$
through the map $\Amap: \bb R^{n \times n} \to \bb R^m$  can also be interpreted as one-bit measurements made on $\bs B \bs X \bs C^\top \in \bb R^{p \times p}$ through the map $\Amap': \bb R^{p \times p} \to \bb R^m$ associated with the $\bs A'_i$.
Namely, in view of $\langle \bs B^\top \bs A'_i \bs C, \bs X \rangle_F = \langle \bs A'_i, \bs B \bs X \bs C^\top \rangle_F$,
we have 
\begin{equation}
\label{eq:fact-one-bit-sensing}
\bs y = \bAmap(\bs X) = \bAmap'(\bs B \bs X \bs C^\top).
\end{equation}
We will see that, under appropriate assumptions on $\Amap'$, $\bs B$, $\bs C$, 
a good approximant to $\bs X \in \ubilr{s}{r}$
constructed from $\bs y = \bAmap(\bs X)$ is
\begin{equation}
\label{eq:PBP-factorizedsensing}
\ts \bs X' := \cl H_{(s)}^{\rm col}\big\{ \cl H_{(s)}^{\rm row}\big[\,\bs B^\top \cl H^{[r]}(\Amap'^* \vy)\,\big]\,\bs C\big\}.    
\end{equation}
Here, $\cl H^{[r]}$ denotes the operator of best rank-$r$ approximation 
(obtained by keeping $r$ leading singular elements),
while $\cl H_{(s)}^{\rm row}$ (respectively $\cl H_{(s)}^{\rm col}$) stands for the operator of best approximation by $s$-row-sparse (respectively \mbox{$s$-column-sparse}) matrices, \ie obtained by keeping $s$ rows (respectively $s$ columns) with largest $\ell_2$-norms. The appropriate assumption on $\Amap'$ is a restricted isometry property of order $2r$ with constant $\delta' \in (0,1)$, reading
\begin{equation}
\label{AssumA'}
(1-\delta') \|\vZ \|_F
\le \|\Amap'(\vZ)\|_1
\le (1+\delta') \|\vZ\|_F
\end{equation}
whenever $\vZ \in \bR^{p \times p}$ has rank $\le 2r$.
The appropriate assumptions on $\vD = \vB, \vC$
are standard restricted isometry properties of order~$2s$ with constant $\delta \in (0,1)$, reading
\begin{equation}
\label{AssumBC}
(1-\delta) \|\vz\|_2^2
\le \|\vD \vz\|_2^2 \le
(1+\delta) \|\vz\|_2^2
\end{equation}
whenever $\vz \in \bR^n$ is $2s$-sparse.
It is routine to verify that the later condition guarantees that,
for any $k \ge 1$,
\begin{equation}
\label{eq:polar-rip-mtx}
\big| \langle (\vI - \vD^\top \vD) \vZ,\vZ' \rangle_F \big|
\le \delta \|\vZ\|_F \|\vZ'\|_F    
\end{equation}
whenever the row-supports of $\vZ,\vZ' \in \bR^{n \times k}$ have combined size at most $2s$.
Under this condition,
for any $s$-row-sparse $\vZ \in \bR^{n \times k}$ 
seen through the inexact measurements $\vY = \vD \vZ + \vE \in \bR^{p \times k}$, one has
\begin{equation}
\label{ObsHT}
\|\vZ - \cl H_{(s)}^{\rm row}[\vD^\top \vY] \|_F
\le 2 \delta \|\vZ\|_F + 2 \sqrt{2} \|\vE\|_F.
\end{equation}
Writing $\vZ' = \cl H_{(s)}^{\rm row}[\vD^\top \vY]$ and $\vM = \vZ - \vZ'$,
this comes from,
using in turn the defining property of $\vZ'$, \eqref{eq:polar-rip-mtx}, and \eqref{AssumBC},
\begin{align*}
&\ts \|\vM\|_F^2 = \|\vD^\top \vY\!-\!\vZ'\|_F^2 - \|\vD^\top \vY\!-\!\vZ\|_F^2 + 2 \langle \vZ\!-\!\vD^\top \vY, \vM \rangle_F\\
&\ts \leq 2 \langle \vZ \!-\! \vD^\top \vY, \vM \rangle_F
= 2 \langle  (\vI \!-\! \vD^\top \vD)\vZ, \vM  \rangle_F - 2 \langle \vE, \vD \vM \rangle_F\\
&\leq 2\delta \|\vZ\|_F \|\vM\|_F + 2 \|\vE\|_F
(1+\delta)^{1/2} \|\vM\|_F.
\end{align*}

We now state and prove the main result of this section.

\begin{theorem}
\label{prop:factsensing-RIP-bounds}
Given $\delta,\delta' \in (0,1)$, under assumptions \eqref{AssumA'} on $\Amap'$
and \eqref{AssumBC} on $\vB$ and $\vC$,
any $\bs X \in \ubilr{s}{r}$ acquired via $\vy = \bAmap(\bs X)$ is approximated by the matrix $\vX'$ given in \eqref{eq:PBP-factorizedsensing} with error 
\begin{equation*}
\|\bs X - \bs X'\|_F \leq C (\sqrt{\delta'} + \delta).    
\end{equation*}
\end{theorem}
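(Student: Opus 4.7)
The plan is to peel off the factorization by first solving a rank-$r$ recovery problem in $\bR^{p \times p}$ and then undoing the two sketching matrices $\bs B$ and $\bs C$ via the observation \eqref{ObsHT}. Set $\bs W := \bs B \bs X \bs C^\top$, so that \eqref{eq:fact-one-bit-sensing} gives $\bs y = \bAmap'(\bs W)$. Because the sign operator is invariant under positive rescalings, the inner rank-$r$ step can only hope to reconstruct $\wt{\bs W} := \bs W / \|\bs W\|_F$ rather than $\bs W$ itself, and the scale has to be recovered separately.

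The first move is to rerun the proof of Theorem~\ref{thm:PBP-error} in the rank-only setting. That argument uses only the inclusion $\bs X \in \cl S \subseteq \bilr{2s}{2r}$ together with the RIP on $\bilr{2s}{2r}$; replacing $\bilr{s}{r}$ by rank-$r$ matrices and $\bilr{2s}{2r}$ by rank-$2r$ matrices, its structure transfers verbatim under the hypothesis \eqref{AssumA'} to yield
$$\|\wt{\bs W} - \bs Z\|_F \le C\sqrt{\delta'}, \quad \text{where} \quad \bs Z := \cl H^{[r]}(\Amap'^* \bs y).$$
To pass from $\wt{\bs W}$ to $\bs W$, the plan is to bound $\bigl|\|\bs W\|_F - 1\bigr| \le C\delta$ using \eqref{AssumBC}: writing the SVD $\bs X = \sum_{k=1}^r \sigma_k \vu_k \vv_k^\top$ with $s$-sparse $\vu_k, \vv_k$ and expanding
$$\|\bs W\|_F^2 = \sum_{k,l} \sigma_k \sigma_l \langle \bs B \vu_k, \bs B \vu_l \rangle \langle \bs C \vv_k, \bs C \vv_l \rangle,$$
the polarization of \eqref{AssumBC} (as used around \eqref{eq:polar-rip-mtx}) controls each inner product up to an additive $\delta$, yielding $\bigl|\|\bs W\|_F^2 - 1\bigr| \le C\delta$ once the $k \neq l$ cross residue is absorbed. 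Combined with the rank-only bound, this produces
$$\bs Z = \bs B \bs X \bs C^\top + \bs E_1 \quad \text{with} \quad \|\bs E_1\|_F \le C(\sqrt{\delta'} + \delta).$$

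The second move applies \eqref{ObsHT} twice. First, with $\vD = \bs B$, $\vZ = \bs X \bs C^\top$ (which is $s$-row-sparse on the row-support $S$ of $\bs X$) and $\vY = \bs Z$, noting $\|\bs X \bs C^\top\|_F \le \sqrt{1+\delta}$ from \eqref{AssumBC},
$$\bigl\|\bs X \bs C^\top - \bs Y_2\bigr\|_F \le 2 \delta\,\|\bs X \bs C^\top\|_F + 2\sqrt 2\,\|\bs E_1\|_F \le C(\sqrt{\delta'} + \delta),$$
where $\bs Y_2 := \cl H_{(s)}^{\rm row}[\bs B^\top \bs Z]$. Transposing, $\bs Y_2^\top$ is then an inexact measurement of the $s$-row-sparse matrix $\bs X^\top$ (row-support $T$) through $\bs C$ with the same Frobenius error. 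A second application of \eqref{ObsHT} with $\vD = \bs C$ gives
$$\bigl\|\bs X^\top - \cl H_{(s)}^{\rm row}[\bs C^\top \bs Y_2^\top]\bigr\|_F \le C(\sqrt{\delta'} + \delta).$$
Since $\bigl(\cl H_{(s)}^{\rm row}[\bs C^\top \bs Y_2^\top]\bigr)^\top = \cl H_{(s)}^{\rm col}[\bs Y_2 \bs C] = \bs X'$ by \eqref{eq:PBP-factorizedsensing}, transposing back delivers the claimed bound.

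The main obstacle is the first move, specifically the normalization step: one-bit measurements carry no information about $\|\bs W\|_F$, so the rank-$r$ recovery necessarily produces only an approximant of $\wt{\bs W}$, and the missing scale $\|\bs W\|_F \approx 1$ must be imported from \eqref{AssumBC} through a careful handling of the off-diagonal cross terms in the SVD expansion of $\|\bs W\|_F^2$. Once this is in place, the two outer uses of \eqref{ObsHT} amount to routine error propagation.
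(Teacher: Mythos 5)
Your overall architecture is exactly the paper's: an inner rank-$r$ projected back projection in $\bR^{p\times p}$ justified by \eqref{AssumA'} and the rank-only version of Theorem~\ref{thm:PBP-error}, followed by two applications of \eqref{ObsHT} to strip off $\vB$ and then $\vC$, with the unknown Frobenius norm of $\vB\vX\vC^\top$ handled as a separate $O(\delta)$ correction. The only substantive deviation is \emph{where} and \emph{how} you establish that $\bigl|\,\|\vB\vX\vC^\top\|_F-1\,\bigr|\le C\delta$, and that is also where your argument does not close as written. Expanding $\|\vB\vX\vC^\top\|_F^2=\sum_{k,l}\sigma_k\sigma_l\langle\vB\vu_k,\vB\vu_l\rangle\langle\vC\vv_k,\vC\vv_l\rangle$ and invoking polarization bounds each off-diagonal factor by $\delta$, so each cross term only by $\delta^2$; summing over $k\neq l$ and using Cauchy--Schwarz on $\sum_k\sigma_k$ leaves a residue of order $r\delta^2$, which is \emph{not} $O(\delta)$ unless $r\delta\lesssim 1$. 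You flag this as ``to be absorbed'' but never absorb it, and there is no reason it can be without an extra assumption coupling $r$ and $\delta$.

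The repair is simpler than the expansion you attempt, and it is what the paper does: $\vX\vC^\top$ is $s$-row-sparse, so each of its columns is an $s$-sparse vector and \eqref{AssumBC} applied columnwise gives $(1-\delta)\|\vX\vC^\top\|_F^2\le\|\vB\vX\vC^\top\|_F^2\le(1+\delta)\|\vX\vC^\top\|_F^2$ directly, with no cross terms; likewise $\|\vX\vC^\top\|_F=\|\vC\vX^\top\|_F$ and $\vX^\top$ is $s$-row-sparse, so a second columnwise application yields $(1-\delta)\le\|\vC\vX^\top\|_F^2\le(1+\delta)$, whence $(1-\delta)^2\le\|\vB\vX\vC^\top\|_F^2\le(1+\delta)^2$ as in \eqref{Multistep2}. (The paper also keeps the scale factor $c_X=\|\vB\vX\vC^\top\|_F^{-1}$ attached to $\vX$ throughout and only converts $\|c_X\vX-\vX'\|_F$ into $\|\vX-\vX'\|_F$ at the very end; your choice to renormalize at the start is equivalent once the norm bound is in hand.) Your two outer applications of \eqref{ObsHT}, including the transposition identity $\cl H_{(s)}^{\rm row}(\vM^\top)=(\cl H_{(s)}^{\rm col}(\vM))^\top$ that identifies the final iterate with $\vX'$ from \eqref{eq:PBP-factorizedsensing}, match the paper and are fine.
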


\begin{proof}
In view of \eqref{eq:fact-one-bit-sensing},
the vector $\vy \in \{ \pm 1\}^m$ provides one-bit measurements made on the rank-$r$ matrix $\vB \vX \vC^\top \in \bR^{p \times p}$ through the sensing map $\Amap': \bR^{p \times p} \to \bR^m$.
Since this map satisfies the restricted isometry property \eqref{AssumA'},
it is known 
(see \cite[Theorem~2]{FouLyn} or even Theorem~\ref{thm:PBP-error} with $s=n=p$) that
\begin{equation}
\label{Multistep1} \| c_X \vB \vX \vC^\top - \cl H^{[r]}(\Amap'^*\vy)\|_F
\le C \sqrt{\delta'},
\end{equation}
with $c_X := \|\vB \vX \vC^\top\|^{-1}_F$. As an aside, we notice that both matrices $\vX \vC^\top \in \bR^{n \times p}$ and $\vX^\top \in \bR^{n \times n}$ are $s$-row-sparse, so that
\begin{eqnarray}
\label{FromRIPB}
(1-\delta) \|\vX \vC^\top \|_F^2 \le 
& \hspace{-2mm} \|\vB \vX \vC^\top \|_F^2 \hspace{-2mm}
& \le (1+\delta) \|\vX \vC^\top \|_F^2,\\
\label{FromRIPC}
(1-\delta) \|\vX^\top \|_F^2 \le
& \hspace{-2mm} \|\vC \vX^\top \|_F^2 \hspace{-2mm}
& \le (1+\delta) \|\vX^\top \|_F^2,
\end{eqnarray}
hence, since $\|\vX \vC^\top \|_F = \|\vC \vX^\top \|_F$ and $\|\vX^\top \|_F = \|\vX \|_F = 1$,
\begin{equation}
\label{Multistep2}
(1-\delta)^2 \le \|\vB \vX \vC^\top \|_F^2 \le (1+\delta)^2.
\end{equation}
Now, viewing $\cl H^{[r]}(\Amap'^*\vy)$ as inexact measurements made on the $s$-row-sparse matrix $\vZ := c_X \vX \vC^\top \in \bR^{n \times p}$ via $\vB$,
namely $\cl H^{[r]}(\Amap'^*\vy) = \vB \vZ + \vE$ with
$\vE := \cl H^{[r]}(\Amap'^*\vy) - c_X \vB \vX \vC^\top$,
\eqref{ObsHT} implies that 
\begin{multline}
\big\| c_X \vX \vC^\top - \cl H_{(s)}^{\rm row}\big[\,\bs B^\top \cl H^{[r]}(\Amap'^* \vy)\,\big]
\big\|_F\\
\le 2 \delta c_X \|\vX \vC^\top\|_F + 2 \sqrt{2} \|\vE\|_F
\label{Multistep3}
\le C\big(\delta + \sqrt{\delta'}\big),
\end{multline}
where the last inequality used \eqref{FromRIPB} and \eqref{Multistep1}.
Similarly, viewing $(\cl H_{(s)}^{\rm row}\big[\,\bs B^\top \cl H^{[r]}(\Amap'^* \vy)\,\big])^\top
= \vC \vZ' + \vE'$
 as inexact measurements made on the \mbox{$s$-row-sparse} matrix $\vZ' := c_X \vX^\top \in \bR^{n \times n}$ via $\vC$
with
$\vE' := (\cl H_{(s)}^{\rm row}\big[\,\bs B^\top \cl H^{[r]}(\Amap'^* \vy)\,\big])^\top - c_X \vC \vX^\top$,
\eqref{ObsHT} implies that 
\begin{multline}
\big\| c_X \vX^\top - \cl H_{(s)}^{\rm row}
\big\{\, \vC^\top (\cl H_{(s)}^{\rm row}\big[\,\bs B^\top \cl H^{[r]}(\Amap'^* \vy)\,\big])^\top \,\big\}
\big\|_F\\
\le 2 \delta c_X \|\vX^\top\|_F + 2\sqrt{2}\|\vE'\|_F
\label{Multistep4}
\le C\big(\delta + \sqrt{\delta'}\big),
\end{multline}
where the last inequality used \eqref{Multistep2} and \eqref{Multistep3}.
By noticing that $\cl H_{(s)}^{\rm row}(\vM^\top) = \big( \cl H_{(s)}^{\rm col}(\vM) \big)^\top$,
\eqref{Multistep4} actually reads
$$
\| c_X \vX - \vX'\|_F
\le C \big (\delta + \sqrt{\delta'}\big).
$$
The final result follows from a triangle inequality and the fact that 
$\| \vX - c_X \vX \|_F
= |\|\vB \vX \vC^\top\|_F - 1| / \|\vB \vX \vC^\top\|_F
\le C \delta$,
which is a consequence of \eqref{Multistep2}.
\end{proof}

We conclude by remarking that assumption \eqref{AssumA'} is valid when $\Amap': \bR^{p \times p} \to \bR^m$ is a properly normalized Gaussian random map with $m \asymp \delta'^{-2}rp$
(see \cite[Theorem~1]{FouLyn} as an instantiation of \cite{Sch})
and that assumption \eqref{AssumBC} is valid when $\vB,\vC \in \bR^{p \times n}$ are a properly normalized Gaussian random matrices with $p \asymp \delta^{-2} s \ln(n/s)$.
Choosing $\delta' = \delta^2$,
we therefore arrive, with high probability, at $\|\vX - \vX'\|_F \le C \delta$ with
$m \asymp \delta^{-6} r s \ln(n/s)$, \ie 
$$
\|\vX - \vX'\|_F \le C \left( \f{r s \ln(n/s)}{m} \right)^{1/6}.
$$
We note that the power $1/6$ is certainly not optimal ---
it can \eg be increased by replacing $\cl H^{[r]}$ by a scheme improving on $\sqrt{\delta'}$ in \eqref{Multistep1}.

\section{Acknowledgments}
S.F. is supported by NSF grants DMS-1622134 and DMS-1664803.  
L.J. is funded by the F.R.S.-FNRS.
L.J. thanks L.~Demanet for interesting discussions on factorized Gaussian sensing for sparse phase retrieval, \eg on \cite{iwen2017robust}.

\end{document}